\newcommand{\version}{July 12, 2015}
\theoremstyle{plain}
\newtheorem{thm}{THEOREM}[section]
\newtheorem{lm}[thm]{LEMMA}
\newtheorem{cl}[thm]{COROLLARY}
\theoremstyle{definition}
\theoremstyle{definition}
\newtheorem{remark}[thm]{Remark}
\newcommand{\upchi}{\raise1pt\hbox{$\chi$}}
\newcommand{\R}{{\mathord{\mathbb R}}}
\newcommand{\C}{{\mathord{\mathbb C}}}
\newcommand{\id}{{\mathds 1}}
\newcommand{\tr}{{\rm Tr}\ }
\numberwithin{equation}{section}
\def\Pn{\mathcal{P}_n}
\begin{document}


\def\tr{{\rm Tr}}

\title{Some Operator and Trace Function Convexity Theorems}
\author{\vspace{5pt} Eric A. Carlen$^1$, Rupert L. Frank$^2$ and
Elliott H. Lieb$^3$ \\
\vspace{5pt}\small{$1.$ Department of Mathematics, Hill Center,}\\[-6pt]
\small{Rutgers University, 110 Frelinghuysen Road,
Piscataway NJ 08854-8019}\\
\vspace{5pt}\small{$2.$ Department of Mathematics,}\\[-6pt]
\small{Caltech,
Pasadena, CA 91125}\\
\vspace{5pt}\small{$3.$ Departments of Mathematics and Physics, Jadwin
Hall,} \\[-6pt]
\small{Princeton University, Washington Road, Princeton, NJ
  08544}\\
 }
\date{\version}
\maketitle

\footnotetext                                                                         
[1]{Work partially
supported by U.S. National Science Foundation
grant DMS-1201354.    }
\footnotetext
[2]{Work partially
supported by U.S. National Science Foundation
grants PHY-1347399 and DMS-1363432.    }                                                          
\footnotetext
[3]{Work partially
supported by U.S. National Science Foundation
grant PHY-1265118.\\
\copyright\, 2015 by the authors. This paper may be reproduced, in its
entirety, for non-commercial purposes.}

\begin{abstract}
We consider trace functions $(A,B)\mapsto \tr[ (A^{q/2}B^p
A^{q/2})^s]$ where $A$ and $B$ are positive $n\times n$
matrices and ask when these functions are convex or concave. We also
consider operator convexity/concavity  of $A^{q/2}B^p A^{q/2}$ and
convexity/concavity of
the closely related trace functional $\tr[ A^{q/2}B^p
A^{q/2} C^r]$.  The concavity questions are completely
resolved, thereby settling cases left open by Hiai; the convexity
questions are settled in many cases. As a consequence, the  Audenaert--Datta
R\'enyi entropy conjectures are proved for some cases. 
\end{abstract}

\medskip
\leftline{\footnotesize{\qquad Mathematics subject classification numbers: 47A63, 94A17, 15A99}}
\leftline{\footnotesize{\qquad Key Words: Operator Convexity, Operator
Concavity, Trace inequality, R\'enyi Entropy}}

\section{Introduction}

Let $\Pn$ denote the set of $n\times n$ positive definite matrices.  For $p,q,s\in \R$, define
\begin{equation}\label{phidef}
\Phi_{p,q,s}(A,B) = \tr[ (A^{q/2}B^p A^{q/2})^s]\ .
\end{equation}
We are mainly interested in the convexity or concavity of the map 
$(A,B)\mapsto
\Phi_{p,q,s}(A,B)$, but we are also interested in the {\it operator}
convexity/concavity of $A^{q/2}B^p A^{q/2}$.
When any of $p$, $q$ or $s$
is zero, the question of convexity is trivial, and we exclude these cases.

Given any $n\times n$ matrix $K$, and with $p$, $q$, $s$ as above, define 
\begin{equation}\label{psidef}
\Psi_{K,p,q,s}(A,B)=   \tr[ (A^{q/2}K^* B^p  K A^{q/2})^s]\ ,
\end{equation}
and note that
\begin{equation}\label{psiphi}
\Phi_{p,q,s}(A,B) = \Psi_{\id,p,q,s}(A,B)  \ .
\end{equation}

The main question to be addressed here is this: {\em For which non-zero
values of $p$, $q$ and $s$ is  $\Psi_{K,p,q,s}(A,B)$
jointly convex or jointly concave on $\Pn\times \Pn$ for all $n$ and all
$K$?}

We begin with several simple reductions.  Since invertible $K$ are dense, it suffices to consider all invertible operators $K$. Then, for $K$ invertible,
$$\Psi_{K,p,q,s}(A,B) = \Psi_{(K^*)^{-1},-p,-q,-s}(A,B)\ ,$$
and therefore it is no loss of generality to assume that $s>0$. We
always make this assumption in what follows.

Next,  the convexity/concavity properties of $\Psi_{K,p,q,s}(A,B)$ are a
consequence of those of $\Phi_{p,q,s}(A,B)$, and hence it suffices to study
the special case $K=\id$. In fact, more is true as stated in the following
Lemma \ref{equiv}.
These equivalences may be useful in other contexts. (For $s=1$ the
equivalence of (1) and (4) is in \cite{Lieb73} and
the equivalence of (1) and
(3) is in \cite{Carlen08}; the arguments in those papers extend to  all
$s$,  but we repeat them here for completeness.)

\begin{lm}[\bf Equivalent formulations] \label{equiv}
The following statements are equivalent for fixed $p,q, s$.

\smallskip

 {\rm(1)} The map $(A,B) \mapsto \Psi_{K,p,q,s}(A,B)$ is convex
for all $K$ and all $n$.
\smallskip

{\rm(2)} The map $\!(A,B)\! \mapsto \!\Psi_{\!K,p,q,s}(\!A,\!B)\!$ is convex
for all unitary $\!K\!\!$ and all $\!n$.
\smallskip

{\rm(3)} The map $(A,B) \mapsto \Psi_{\id ,p,q,s}(A,B) =\Phi_{p,q,s}(A,B)$
is
convex for all $n$. 
\smallskip

{\rm(4)} The map $ A\mapsto \Psi_{K ,p,q,s}(A,A) $ is
convex for all $K$ and all $n$.
\smallskip

{\rm(5)} The map $ A\mapsto \Psi_{K ,p,q,s}(A,A) $ is
convex for all unitary $K$ and all $n$.
\smallskip

\noindent The same is true if convex is replaced by concave in all
statements.

\end{lm}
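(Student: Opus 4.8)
The plan is to prove the five statements equivalent by a cycle in which almost every arrow is immediate, so that the content concentrates in three ``upgrading'' steps. First I record the cheap implications. Statement (1) implies all the rest: (2) because unitaries are among all $K$; (3) by taking $K=\id$; (4) because the restriction of a jointly convex function to the diagonal $\{(A,A)\}$ (an affine subspace) is convex; and (5) by combining (1)$\Rightarrow$(4) with ``unitaries $\subset$ all $K$''. Moreover, since $U^*B^pU=(U^*BU)^p$ for unitary $U$ and $B>0$, one has $\Psi_{U,p,q,s}(A,B)=\Phi_{p,q,s}(A,U^*BU)$, i.e.\ $\Psi_{U,p,q,s}$ is $\Phi_{p,q,s}$ precomposed with the linear map $(A,B)\mapsto(A,U^*BU)$; this gives (2)$\Leftrightarrow$(3) at once (for the forward direction take $U=\id$). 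Hence only three substantive implications remain: (4)$\Rightarrow$(1), (3)$\Rightarrow$(1), and (5)$\Rightarrow$(3).

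For (4)$\Rightarrow$(1) I would use the block device of \cite{Lieb73}. Given $K$ on $\C^n$, set $\mathbf{A}=\left(\begin{smallmatrix}A&0\\0&B\end{smallmatrix}\right)$ and $K'=\left(\begin{smallmatrix}0&0\\K&0\end{smallmatrix}\right)$ on $\C^{2n}$; a direct computation gives $\mathbf A^{q/2}K'^*\mathbf A^pK'\mathbf A^{q/2}=\left(\begin{smallmatrix}A^{q/2}K^*B^pKA^{q/2}&0\\0&0\end{smallmatrix}\right)$, so $\Psi_{K',p,q,s}(\mathbf A,\mathbf A)=\Psi_{K,p,q,s}(A,B)$ exactly. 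Since $(A,B)\mapsto\mathbf A$ is linear, (4) applied to $K'$ on $\C^{2n}$ yields joint convexity of $\Psi_{K,p,q,s}$. For (3)$\Rightarrow$(1), by density and the scaling $\Psi_{cK,p,q,s}=c^{2s}\Psi_{K,p,q,s}$ it suffices to treat a strict contraction $K$, following \cite{Carlen08}. I would dilate $K$ to a unitary $U=\left(\begin{smallmatrix}K&-D_{K^*}\\ D_K&K^*\end{smallmatrix}\right)$ on $\C^{2n}$, with $D_K=(\id-K^*K)^{1/2}$ and $D_{K^*}=(\id-KK^*)^{1/2}$, and set $\hat A=\left(\begin{smallmatrix}A&0\\0&\delta\id\end{smallmatrix}\right)$, $\hat B=\left(\begin{smallmatrix}B&0\\0&\varepsilon\id\end{smallmatrix}\right)$. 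Then $\Psi_{U,p,q,s}(\hat A,\hat B)=\Phi_{p,q,s}(\hat A,U^*\hat BU)$ is jointly convex in $(A,B)$ by (3) and the first paragraph, and as $\varepsilon^p\to0$ and $\delta^{q/2}\to0$ the matrix $\hat A^{q/2}U^*\hat B^pU\hat A^{q/2}$ converges to $\left(\begin{smallmatrix}A^{q/2}K^*B^pKA^{q/2}&0\\0&0\end{smallmatrix}\right)$, whence $\Psi_{U,p,q,s}(\hat A,\hat B)\to\Psi_{K,p,q,s}(A,B)$. Here the two limits are independent: the direction $\delta\to0^+$ versus $\delta\to+\infty$ is dictated by the sign of $q$ and $\varepsilon\to0^+$ versus $+\infty$ by the sign of $p$, so no sign conflict arises; and a pointwise limit of jointly convex functions is jointly convex.

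The implication (5)$\Rightarrow$(3) is where the real difficulty lies, and I expect it to be the main obstacle. A direct dilation fails on the diagonal, because a single block would have to regularize both the $q/2$-role and the $p$-role of the same matrix, and these demand opposite limits when $p,q$ have opposite signs. So I would instead apply (5) on $\C^{2n}$ with the swap $U=\left(\begin{smallmatrix}0&\id\\ \id&0\end{smallmatrix}\right)$ and with $\hat A=\left(\begin{smallmatrix}A&0\\0&tB\end{smallmatrix}\right)$, which is linear in $(A,B)$ for fixed $t>0$; a short computation gives
\[
\Psi_{U,p,q,s}(\hat A,\hat A)=t^{ps}\,\Phi_{p,q,s}(A,B)+t^{qs}\,\Phi_{p,q,s}(B,A),
\]
jointly convex for every $t>0$. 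The obstacle is that the bare swap ($t=1$) only yields the symmetrized sum $\Phi_{p,q,s}(A,B)+\Phi_{p,q,s}(B,A)$, and convexity of a symmetric sum does not give convexity of one summand; worse, the only elementary block combinations one can form reproduce exactly such symmetric sums, so the symmetry must be broken analytically. The scaling parameter does this: when $p\neq q$, divide by $t^{ps}$ to get $\Phi_{p,q,s}(A,B)+t^{(q-p)s}\Phi_{p,q,s}(B,A)$ (still convex) and let $t\to+\infty$ if $q<p$, or $t\to0^+$ if $q>p$, so the unwanted term drops out and $\Phi_{p,q,s}(A,B)$ appears as a pointwise limit of convex functions. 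When $p=q$ no limit is needed: writing $C=B^{p/2}A^{p/2}$, the matrices $A^{q/2}B^pA^{q/2}=C^*C$ and $B^{q/2}A^pB^{q/2}=CC^*$ have the same nonzero spectrum, so $\Phi_{p,q,s}(A,B)=\Phi_{p,q,s}(B,A)$ and the symmetric sum is already $2\Phi_{p,q,s}$.

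Assembling $(1)\Rightarrow(2),(3),(4),(5)$ with $(2)\Leftrightarrow(3)$, $(4)\Rightarrow(1)$, $(3)\Rightarrow(1)$, and $(5)\Rightarrow(3)$ closes the chain and makes all five statements equivalent. The concave case is identical: restriction to the diagonal, precomposition with the affine maps above, multiplication by the positive constants $c^{2s}$ and $t^{-ps}$, and pointwise limits all preserve concavity, so every step goes through verbatim with ``convex'' replaced by ``concave''.
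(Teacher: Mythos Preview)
Your proof is correct and follows the same overall strategy as the paper: the identity $\Psi_{U,p,q,s}(A,B)=\Phi_{p,q,s}(A,U^*BU)$ for unitary $U$ giving $(2)\Leftrightarrow(3)$, the unitary dilation of a contraction to pass from $(3)$ to $(1)$, and the block-swap trick for $(5)\Rightarrow(3)$ are all exactly what the paper does. Two points are worth noting. First, your direct implication $(4)\Rightarrow(1)$ via the non-unitary $K'=\left(\begin{smallmatrix}0&0\\K&0\end{smallmatrix}\right)$ is a clean shortcut that the paper does not take; it makes $(4)$ visibly equivalent to $(1)$ without going through $(5)$. Second, and more interestingly, your treatment of $(5)\Rightarrow(3)$ is more complete than the paper's: the paper simply writes down $\hat A=\left(\begin{smallmatrix}A&0\\0&B\end{smallmatrix}\right)$ and the swap $U=\left(\begin{smallmatrix}0&\id\\ \id&0\end{smallmatrix}\right)$, but as you correctly observe this only yields convexity of the symmetrized sum $\Phi_{p,q,s}(A,B)+\Phi_{p,q,s}(B,A)$, and these two summands are not equal when $p\neq q$. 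Your introduction of the scaling parameter $t$ in $\hat A=\left(\begin{smallmatrix}A&0\\0&tB\end{smallmatrix}\right)$, followed by division by $t^{ps}$ and a limit in $t$ (together with the $X^*X\sim XX^*$ observation when $p=q$), is precisely the missing step needed to isolate $\Phi_{p,q,s}(A,B)$ itself. So your argument here is not merely a different route but a genuine completion of the paper's sketch.
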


\begin{proof} Trivially, (1) implies the other four items.
\smallskip

When $K$ is unitary, $K^*A^q K = (K^*AK)^q$, and hence (3) implies (2) (even for each fixed $n$).
By taking $K=\id$, (2) implies (3) (again for each fixed $n$). 
\smallskip

Next we show that (2) implies (1), whence (1), (2) and (3) are equivalent. 
We may suppose, without loss of generality that $K$ is a
contraction. Let $K = W|K|$ 
be its polar decomposition. Then 
$$\mathcal{U} = \left[\begin{array}{cc} K & W \sqrt{\id-|K|^2}\\
-W\sqrt{\id-|K|^2} & K \end{array}\right]$$
is unitary.  We consider the case $q < 0$ first. 
For arbitrary $t>0$, let 
$$\mathcal{A}_t = \left[\begin{array}{cc} A & 0\\ 0 &
t\id\end{array}\right]\ ,\  \mathcal{B} = \left[\begin{array}{cc} B & 0\\ 0 &
0\end{array}\right] \,.$$
Then
$$\left[\begin{array}{cc} A^{q/2}K^* B^p  K A^{q/2}  & 0\\ 0 & 0
\end{array}\right] =  \lim_{t\to\infty}\mathcal{A}_t^{q/2}\mathcal{U} ^*
\mathcal{B}^p  \mathcal{U}  \mathcal{A}_t^{q/2} \, .$$
Thus, recalling that we always assume $s>0$, 
$$\tr[ (A^{q/2}K^* B^p  K A^{q/2} )^s] = \lim_{t\to\infty}
\Psi_{\mathcal U,p,q,s}(\mathcal{A}_t,\mathcal{B})\ .$$
Thus, (2) with $2n$ implies (1) with $n$. 
The case $q>0$ is treated analogously, letting $t\to 0$. 
\smallskip

Trivially, (4) implies (5). To show that (5) (with $2n$) implies
(3) (with $n$), thereby completing the loop, replace $A$ in (5) by
$\left[\begin{array}{cc} A & 0\\ 0 &
B\end{array}\right]$, and replace $K$ by the unitary
$\left[\begin{array}{cc} 0 & \id \\ \id  &
0\end{array}\right]$.
\end{proof}


\section{Known results and our extension of them}\label{sec:known}

Hiai has proved in \cite{Hiai} that if $p$, $q$  are both non-zero, and $s>0$,
and $\Phi_{p,q,s}$ is jointly {\it convex} in $A$ and $B$, then,
\textit{necessarily}, one of the following conditions holds:

\smallskip
\noindent{\it (1.)} \ $1\leq p \leq 2$ and $-1 \leq q < 0$ and $s\geq
1/(p+q)$, or the same with $p$ and $q$ interchanged.

\smallskip
\noindent{\it (2.)} \ $-1\leq p, q < 0$  and $s>0$.

In the special case $s=1$, condition {\it (1.)} was proved  to be
sufficient in 
\cite[Corollary 6.3]{Ando}, and condition {\it (2.)} was proved to be
sufficient 
in \cite[Theorem 8]{Lieb73}; see also
\cite{Bekjan} for $s=1$ and one of  $p,q$ negative. 
Hiai \cite{Hiai} has also proved that $\Phi_{p,q,s}$ is jointly convex in case
$-1\leq p,q < 0$ and $1/2 \leq s \leq -1/(p+q)$.\footnote{After this work was submitted, Hiai posted the preprint arXiv:1507.00853 in which he extended our method to prove joint convexity under condition (2.).} Our main focus is on {\it (1.)}. The joint convexity in this case is known \cite{Frank13} when $s=1/(p+q)$, $p=1$ and $-1\leq q < 0$, and of course, with $p$ and $q$ interchanged.

Concerning {\it concavity,} Hiai has shown \cite{Hiai} that if $p$, $q$  are
both non-zero, and $s>0$,
and $\Phi_{p,q,s}$ is jointly concave  in $A$ and $B$, then, \emph{necessarily}, the following condition holds:
\smallskip

\noindent
{\it (3.)}\  $0< p,q \leq 1$ and  $0<s \leq 1/(p+q)$.

In the special case $s=1$, this condition was proved to be
sufficient in \cite[Theorem 1]{Lieb73}; Hiai \cite{Hiai} showed sufficiency for
${1}/{2}\leq s \leq 1/(p+q)$. 

\medskip

Our contribution to the subject is to fill in parts of the table of
sufficient/necessary conditions in the following manner.
We were motivated
in this endeavor by a recent paper of Audenaert and Datta \cite{Audenaert},
(and Datta's Warwick lecture on it) and we prove some of their
conjectures.

All the results mentioned above refer to trace inequalities. There are some
{\it operator} convexity/concavity inequalities to be considered as well,
and we will present some in the following. 

As far as convexity of $\Phi_{p,q,s}$ is concerned we can summarize our
results as follows. We are concerned with the region $p\in [1,2] $, $q\in
[-1,0)$ and $s\geq 1/(p+q)$. (Clearly, $s$ cannot be smaller than $1/(p+q)$
by homogeneity.) We prove joint convexity for $s \geq \min\left\{ \frac{1}{p-1}\ , \ \frac{1}{1+q}\right\}$ (Thm. \ref{larges}). Moreover, we prove joint convexity for $p=1$ and $p=2$ in the optimal range $s\geq 1/(p+q)$ (Thm. \ref{p2}).

For $p\in (1,2)$, $q\in [-1,0)$, the missing regions, where we believe joint convexity also holds, is $1/(p+q) \leq s <1$ and $1<s< \min\left\{
\frac{1}{p-1}\ , \ \frac{1}{1+q}\right\}$. (Ando's theorem \cite{Ando} covers the cases $1/(p+q)\leq s=1$.)

On the other hand, our results completely close the gap between 
necessary and sufficient conditions for  concavity to hold.
The trace function $\Phi_{p,q,s}$ is jointly concave if and only if
$0<p,q \leq 1$ and $0\leq s \leq 1/(p+q)$ (Thm. \ref{conc}). This completes
Hiai's results discussed above. 

As for joint {\it operator} convexity, we prove it for $(A,B)\mapsto BA^qB$ if $-1\leq q<0$,
and show that it does {\it not hold} for $(A,B)\mapsto
B^{p/2}A^qB^{p/2}$ for any $p<2$ (Thm. \ref{opcon}). (Note that it cannot
hold for $p>2$ since
$B\mapsto B^p$ is not operator convex when $p>2$.)

\section{Joint operator convexity}

We investigate  operator convexity and concavity  of certain functions on $\Pn\times \Pn$. It is well known \cite{Kiefer, Ruskai} that
\begin{equation}\label{opcon1}
(A,B) \mapsto AB^{-1}A\ 
\end{equation}
is jointly convex. In the scalar case $(n=1)$, $f(a,b) = a^qb^p$ is jointly convex on $(0,\infty)\times(0,\infty)$ if and only if  $p\geq 1$, $q\leq 0$ and $p+q \geq 1$, or 
$q\geq 1$, $p\leq 0$ and $p+q \geq 1$, or $p,q \leq 0$. It is jointly concave if and only if $0 \leq p,q \leq 1$ and $p+q \leq 1$. It is natural to ask for which powers $p$ and $q$
\begin{equation}\label{pow}
(A,B) \mapsto A^{q/2}B^pA^{q/2}
\end{equation}
is jointly operator convex or concave.

This question is closely related to the question: For which values of
$p,q,r$ is 
\begin{equation} \label{pow2}
(A,B,C) \mapsto \tr A^{q/2}B^pA^{q/2}C^r
\end{equation}
jointly convex or concave in the positive
operators $A,B,C$?  

\begin{lm}\label{relation} When the function in (\ref{pow2}) is convex (or
concave) for some choice of $p$, $q$ and $r$ all non-zero, then 
the function in (\ref{pow}) is operator convex (or concave) for the same $p$
and $q$.
\end{lm}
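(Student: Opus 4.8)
The plan is to exploit the duality between operator inequalities and their trace pairings: for self-adjoint matrices $X$ and $Y$ one has $X\le Y$ if and only if $\tr[XD]\le \tr[YD]$ for every positive definite $D$ (equivalently, for every $D\ge 0$, by continuity). Thus operator convexity of $(A,B)\mapsto A^{q/2}B^pA^{q/2}$ is equivalent to the statement that $(A,B)\mapsto \tr[A^{q/2}B^pA^{q/2}\,D]$ is scalar-convex for every positive definite $D$. The task is therefore to produce such scalar convexity for all $D$ out of the joint convexity hypothesis on (\ref{pow2}).

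The bridge is the change of variable $D=C^r$. Since $r\neq 0$, the map $C\mapsto C^r$ is a bijection of $\Pn$ onto itself, so as $C$ ranges over $\Pn$ the matrix $C^r$ ranges over all of $\Pn$. First I would observe that joint convexity of the function in (\ref{pow2}) on $\Pn\times\Pn\times\Pn$ implies, upon restricting to an arbitrary fixed value of $C$, that $(A,B)\mapsto \tr[A^{q/2}B^pA^{q/2}\,C^r]$ is convex for each fixed $C$: restricting a jointly convex function to the slice on which $C$ is held constant preserves convexity, since convex combinations of $(A_1,B_1,C)$ and $(A_2,B_2,C)$ remain in that slice. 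Rewriting $C^r=D$, this yields precisely the convexity of $(A,B)\mapsto \tr[A^{q/2}B^pA^{q/2}\,D]$ for every positive definite $D$.

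To finish I would unwind the definitions. Fix $A=\lambda A_1+(1-\lambda)A_2$ and $B=\lambda B_1+(1-\lambda)B_2$ with $\lambda\in[0,1]$, abbreviate $F(A,B)=A^{q/2}B^pA^{q/2}$, and pair against an arbitrary positive definite $D$. The convexity just established reads
$$
\tr\big[F(A,B)\,D\big]\ \le\ \lambda\,\tr\big[F(A_1,B_1)\,D\big]+(1-\lambda)\,\tr\big[F(A_2,B_2)\,D\big]\,.
$$
Since this holds for every positive definite $D$, and hence by continuity for every $D\ge 0$, the trace-duality characterization gives the operator inequality
$$
F(A,B)\ \le\ \lambda\,F(A_1,B_1)+(1-\lambda)\,F(A_2,B_2)\,,
$$
which is exactly operator convexity of (\ref{pow}). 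The concave case is identical with all inequalities reversed.

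The argument is essentially a soft duality computation, so no single step is a serious obstacle; the only point requiring care is the passage from convexity at each fixed $C$ to convexity for all positive weights $D$. This is what forces the hypothesis $r\neq 0$, so that $C\mapsto C^r$ surjects onto $\Pn$, and it is precisely the reason the auxiliary variable $C$ is introduced in (\ref{pow2}).
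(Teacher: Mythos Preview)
Your proof is correct. The paper takes a slightly different route: rather than invoking the bijection $C\mapsto C^r$ on $\Pn$ to obtain \emph{all} positive definite weights $D=C^r$, it tests only against rank-one projections $P=|v\rangle\langle v|$, which already suffice since $X\le Y$ iff $\langle v|X|v\rangle\le\langle v|Y|v\rangle$ for every $v$. For $r>0$ the paper simply takes $C$ to be a rank-one projection (so $C^r=C$); for $r<0$ it sets $C=P+tP^\perp$, giving $C^r=P+t^rP^\perp\to P$ as $t\to\infty$. Your argument is arguably cleaner in that it avoids this case split and the limiting construction (as well as the slight abuse that a rank-one projection is not in $\Pn$), at the modest cost of appealing to the surjectivity of $C\mapsto C^r$ on $\Pn$. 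Both approaches rest on the same core observation you made explicit: joint convexity in $(A,B,C)$ restricts to convexity in $(A,B)$ at each fixed $C$, and then one only needs enough values of $C^r$ to separate operator inequalities.
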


\begin{proof}  When $r$ is positve, simply take $C$ to be any rank-one projection.  
 When $r$ is negative, let $P$ be any rank-one projection, $t>0$. Take $C$ to be
$P+tP^\perp$, so that $C^r =  P+t^r P^\perp$ and let $t$ tend to $\infty$. 
\end{proof}

Thus, the operator
convexity/concavity  of the operator-valued function in \eqref{pow} is a 
consequence of the seemingly weaker tracial convexity/concavity of 
\eqref{pow2}. In short, \eqref{pow2} is {\em stronger} than \eqref{pow} for
the same values of $p,q$. The value of $r$ is irrelevant as long as it is
not zero, and the implication does not even require convexity/concavity in
$C$, only joint convexity/concavity in $A$ and $B$. 

When $p,r<0$, and $-1\leq p+r <0$,
then the map $(A,B,C) \mapsto \tr A^{}B^pA^{*}C^r$ is
jointly convex for $B,C$ positive and $A$ arbitrary. This was proved in
\cite[Corollary 2.1]{Lieb73}. (This {\it triple convexity theorem} is deeper
than the double convexity theorem \cite[Theorem 8]{Lieb73} referred to in
the previous section because it uses \cite[Theorem 2]{Lieb73} in an
essential way.)  By 
restricting ourselves to $A$ positive and 
taking $q=2$ this function of $A,B,C$ reduces to \eqref{pow2}.

By Lemma~\ref{relation}, the function \eqref{pow}
is jointly convex when $q=2$ and $-1\leq p <0$. Our main result in this
section is that there are no other cases in which this
operator-valued function  is either
convex or concave\,!

\begin{thm}\label{opcon}
Let $p,q\in\R\setminus\{0\}$ and consider the map
\begin{equation}
\label{pow1}
(A,B) \mapsto A^{q/2} B^{p}A^{q/2}
\end{equation}
from $\Pn \times \Pn$ to $\Pn$ for some fixed $n\geq 2$.

\smallskip

\noindent{\it (1.)} The map \eqref{pow1} is jointly operator convex if and only if 
$q=2$ and $-1\leq p<0$.

\smallskip

\noindent{\it  (2.)} The  map \eqref{pow1} is {\rm not} jointly operator concave.
\end{thm}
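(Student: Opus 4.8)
The sufficiency of the condition in (1.) is already established in the text above (via Lemma~\ref{relation} and the triple convexity theorem \cite{Lieb73}), so the plan is to prove the two necessity statements and part (2.). The strategy is to exploit that joint operator convexity (resp.\ concavity) forces \emph{separate} operator convexity (resp.\ concavity) in each of $A$ and $B$, and then to analyze these one-variable problems by a pointwise second-derivative test. Throughout I would use the classical fact that $t\mapsto t^r$ on $(0,\infty)$ is operator convex iff $r\in[-1,0]\cup[1,2]$ and operator concave iff $r\in[0,1]$.

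First I would fix $B$ and study the map $A\mapsto A^{q/2}MA^{q/2}$, where $M=B^p$ is an arbitrary positive matrix. Computing the second derivative at $A=\id$ in a Hermitian direction $H$ gives, up to a positive factor, the operator $q(q-2)(H^2M+MH^2)+2q^2\,HMH$, which must be positive semidefinite (for convexity) or negative semidefinite (for concavity). The crux is the choice $H=\mathrm{diag}(1,0,\dots,0)$ together with an $M$ whose $(1,2)$ entry is nonzero (e.g.\ the top-left $2\times 2$ block $\left[\begin{smallmatrix}2&1\\1&2\end{smallmatrix}\right]$): then the $(2,2)$ entry of this Hessian vanishes while its $(1,2)$ entry equals $q(q-2)m_{12}$. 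A Hermitian matrix with a zero diagonal entry can be semidefinite, of either sign, only if the whole corresponding row and column vanish, so we must have $q(q-2)m_{12}=0$, forcing $q=2$ since $q\neq 0$. This single computation handles both the convex and the concave case, and I expect it to be the main obstacle: the one-variable convexity/concavity in $A$ alone already eliminates every $q\neq 2$, and finding the right non-commuting test pair $(H,M)$ is the essential idea.

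With $q=2$ in hand the map is $(A,B)\mapsto A B^p A$, and the remaining work is to pin down $p$. Fixing $A$ invertible, the congruence $X\mapsto AXA$ is an order isomorphism, so $B\mapsto AB^pA$ is operator convex (resp.\ concave) iff $B\mapsto B^p$ is, i.e.\ iff $p\in[-1,0]\cup[1,2]$ (resp.\ $p\in[0,1]$). Restricting instead to the diagonal $A=B$ gives $A\mapsto A^{p+2}$, which is operator convex iff $p\in[-3,-2]\cup[-1,0]$ and operator concave iff $p\in[-2,-1]$. For convexity the two constraints intersect in exactly $[-1,0]$, and since $p\neq 0$ this is the asserted range $-1\le p<0$; combined with the cited sufficiency this completes (1.). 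For concavity the two constraints $[0,1]$ and $[-2,-1]$ are disjoint, so no value of $p$ survives and (2.) follows. (Equivalently, $q=2$ already contradicts the requirement $q\in[0,1]$ obtained by taking $M=\id$ in the concave Hessian test.)

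The only genuinely delicate point is this one-variable Hessian computation and the observation that a semidefinite matrix cannot carry a nonzero entry in a row whose diagonal entry is zero; everything else is bookkeeping with the operator-convexity ranges of the power functions. I would also check that the test survives in every fixed dimension $n\ge 2$, which it does because $H$ and the relevant block of $M$ are supported on a two-dimensional coordinate subspace, so the obstruction already appears in a $2\times 2$ principal submatrix of the Hessian.
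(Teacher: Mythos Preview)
Your argument is correct, and it follows a genuinely different route from the paper's. The paper first uses the scalar case and the known operator-convexity ranges of $t\mapsto t^r$ to reduce to a short list of exponent regions, and then eliminates those one by one via Lemma~\ref{counter}, which builds explicit $2\times 2$ counterexamples showing that $X\mapsto X^{r}YX^{r}$ is not operator convex for rank-one $Y$ and $r\in(-\infty,0)\cup(0,1)$; concavity is ruled out by a separate homogeneity argument. Your approach is more direct: the single Hessian computation at $A=\id$ with $H=e_1e_1^{*}$ and an $M$ with $M_{12}\neq 0$ forces $q(q-2)=0$ in one stroke, for convexity and concavity alike, bypassing the case analysis and the explicit constructions of Lemma~\ref{counter}. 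Once $q=2$ is known, your two one-variable restrictions (freezing $A$, and setting $A=B$) finish the job by intersecting the admissible power ranges for $B\mapsto B^{p}$ and $A\mapsto A^{p+2}$. What the paper's approach buys is a standalone counterexample lemma and concrete, finite (rather than infinitesimal) violations; what yours buys is brevity and a uniform mechanism that handles convexity and concavity simultaneously without splitting into subcases. One small cosmetic point: for the test you need $M=B^{p}\in\Pn$, so when you write the top-left $2\times 2$ block as $\left[\begin{smallmatrix}2&1\\1&2\end{smallmatrix}\right]$ in dimension $n\geq 3$ you should say explicitly that the remaining diagonal entries are positive (e.g., equal to $1$) so that $M$ is positive definite.
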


\begin{cl}
Let $p,q\in\R\setminus\{0\}$.
The function $
(A,B,C) \mapsto \tr A^{q/2}B^pA^{q/2}C^r$
is never concave,  and it is convex if and only if $q=2$, $p,r<0$ and 
$-1\leq p+r <0$.
\end{cl}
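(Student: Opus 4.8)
The plan is to reduce everything to Theorem~\ref{opcon} and Lemma~\ref{relation}, supplemented by a short scalar computation at $n=1$ to control the exponent $r$.

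First I would dispose of the concavity assertion, which costs nothing. If the map in \eqref{pow2} were jointly concave for some non-zero $p,q,r$, then Lemma~\ref{relation} would force the operator-valued map \eqref{pow1} to be jointly operator concave for the same $p,q$, contradicting part {\it (2.)} of Theorem~\ref{opcon}. Hence \eqref{pow2} is never concave. For the convexity claim, sufficiency is already in hand: when $q=2$, $p,r<0$ and $-1\le p+r<0$, the map is the restriction to positive $A$ of Lieb's triple convexity theorem \cite[Corollary 2.1]{Lieb73}, which gives joint convexity of $(A,B,C)\mapsto \tr A B^p A^* C^r$ for $B,C$ positive and $A$ arbitrary precisely in this range; since $q=2$ makes $A^{q/2}=A$, this is exactly \eqref{pow2}, as already observed in the discussion preceding Theorem~\ref{opcon}.

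The substance lies in the necessity direction. Suppose \eqref{pow2} is jointly convex for some non-zero $p,q,r$. Lemma~\ref{relation} forces \eqref{pow1} to be jointly operator convex, so part {\it (1.)} of Theorem~\ref{opcon} yields $q=2$ and $-1\le p<0$. It remains to constrain $r$ and to couple it to $p$. Here I would pass to $n=1$: joint convexity of \eqref{pow2} implies convexity of the scalar monomial $(a,b,c)\mapsto a^{q}b^{p}c^{r}$ on $(0,\infty)^3$, whose Hessian equals $g$ times the conjugation of $M$ by $\mathrm{diag}(a^{-1},b^{-1},c^{-1})$, where $g>0$ is the function value and $M_{ij}=\alpha_i\alpha_j-\delta_{ij}\alpha_i$ with $\alpha=(q,p,r)$. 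Thus convexity forces $M\succeq 0$. Setting $q=2$ and inspecting principal minors, the $\{1,3\}$ minor equals $-2r(r+1)$, which forces $-1\le r\le 0$ and hence $r<0$, while $\det M=2pr(1+p+r)$; since $pr>0$, this forces $1+p+r\ge 0$. Together with $p,r<0$ this gives exactly $-1\le p+r<0$, completing the characterization (and the conditions $-1\le p<0$, $-1\le r<0$ are then automatic).

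The main obstacle is the coupling relation $-1\le p+r$. Theorem~\ref{opcon} sees only the pair $(p,q)$ and says nothing about $r$, so this inequality cannot be read off from the operator result and must come from a genuinely three-variable computation, here the sign of the $3\times 3$ determinant of $M$. It is worth stressing why both ingredients are needed: the scalar criterion alone is far too weak—$M\succeq 0$ holds whenever $p,q,r\le 0$, for instance—so it cannot deliver $q=2$, which only the operator theorem provides; conversely the operator theorem cannot see the $r$-dependence. The remaining care is bookkeeping at the boundaries ($p+r=-1$ and the excluded $r=0$) so that the final inequalities are stated sharply, and confirming that the scalar constraints are consistent with the range $-1\le p<0$ already obtained.
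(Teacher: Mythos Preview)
Your proposal is correct and follows essentially the same strategy as the paper: Lemma~\ref{relation} together with Theorem~\ref{opcon} disposes of concavity and forces $q=2$ with $-1\le p<0$ (and, via the cyclic rewriting $\tr A^{q/2}B^pA^{q/2}C^r=\tr A^{q/2}C^rA^{q/2}B^p$, also $-1\le r<0$), while \cite[Corollary 2.1]{Lieb73} supplies sufficiency. Your explicit $n=1$ Hessian computation, yielding $\det M=2pr(1+p+r)$ and hence the coupling $p+r\ge -1$, is in fact more careful than the paper's one-line argument, which does not spell out where this constraint comes from; your observation that neither ingredient alone suffices (the scalar test cannot isolate $q=2$, the operator theorem cannot see $r$) is exactly the right diagnosis.
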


\begin{proof} 
 By Lemma~\ref{relation}, any triple
convexity/concavity would imply the corresponding operator
convexity/concavity, which is ruled out by the previous Theorem \ref{opcon},
except when 
  $q=2$, $p,r<0$ and $-1\leq p+r <0$. In this case convexity is provided by
\cite[Corollary 2.1]{Lieb73}. 
\end{proof}

Our counterexamples to operator convexity and concavity given in  Theorem
\ref{opcon} will be based on the following lemma.

\begin{lm}\label{counter}
Let $r\in(-\infty,0)\cup(0,1)$, let $Y\geq 0$ be rank one and $n\geq 2$. 
Then the map $X\mapsto X^r Y X^r$ from $\Pn$ to $\Pn$ is {\rm not} operator
convex. 
\end{lm}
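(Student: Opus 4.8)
The plan is to disprove operator convexity by a second-order (Hessian) computation at the base point $X=\id$, after two cheap reductions. First, since $Y\ge 0$ has rank one I write $Y=\lambda\,uu^*$ with $\lambda>0$ and $u$ a unit vector; conjugating the map $X\mapsto X^rYX^r$ by a unitary $U$ with $Uu=e_1$ replaces $Y$ by $UYU^*=\lambda\,e_1e_1^*$ and preserves operator convexity (pre-composition with the automorphism $X\mapsto U^*XU$ of $\Pn$ and post-composition with $T\mapsto UTU^*$ both preserve it), while the positive scalar $\lambda$ only rescales everything. Hence it suffices to treat $Y=uu^*$ with $u=e_1$. Second, I will use the elementary fact that if $F:\Pn\to\Pn$ is operator convex then for every Hermitian $H$ and every vector $v$ the scalar function $t\mapsto v^*F(\id+tH)v$ is convex on the interval where $\id+tH\in\Pn$, so its second derivative at $t=0$ is nonnegative; equivalently the Hessian $M:=\frac{\dd^2}{\dd t^2}\big|_{t=0}F(\id+tH)$ must satisfy $M\ge 0$. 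I will produce an $H$ for which $M\not\ge 0$.

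Computing $M$ is immediate at the identity, because $(\id+tH)^r$ has the convergent binomial expansion $\id+rtH+\tfrac{r(r-1)}{2}t^2H^2+O(t^3)$ for small $t$. Writing $G(t)=(\id+tH)^r$ and differentiating $F(\id+tH)=G(t)YG(t)$ twice gives
$$M=r(r-1)\bigl(H^2Y+YH^2\bigr)+2r^2\,HYH\,.$$
With $Y=uu^*$ and the abbreviations $a=u^*v$, $b=u^*Hv$, $c=u^*H^2v$, the quadratic form becomes
$$v^*Mv=2r(r-1)\,\mathrm{Re}(\bar a c)+2r^2|b|^2\,.$$
The strategy is now transparent: I will choose $v$ orthogonal to $Hu$, so that $b=u^*Hv=\langle Hu,v\rangle=0$ kills the manifestly nonnegative term $2r^2|b|^2$, leaving the sign of $v^*Mv$ governed entirely by the indefinite cross term $2r(r-1)\mathrm{Re}(\bar a c)$.

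I will take $H$ and $v$ supported on the span of $e_1,e_2$ (possible since $n\ge 2$), with $H=\begin{pmatrix}\alpha&\beta\\ \bar\beta&\delta\end{pmatrix}$ Hermitian and $v$ the vector with first two entries $\beta$ and $-\alpha$. Since $\alpha$ is real this $v$ is orthogonal to $Hu=(\alpha,\bar\beta)^{\!\top}$, so $b=0$. A direct computation then gives $a=\beta$ and $c=\beta(|\beta|^2-\alpha\delta)=-\beta\det H$, whence $\mathrm{Re}(\bar a c)=-|\beta|^2\det H$ and
$$v^*Mv=-2r(r-1)\,|\beta|^2\det H\,.$$
Thus I only need $\beta\ne 0$ and $\mathrm{sign}(\det H)=\mathrm{sign}\bigl(r(r-1)\bigr)$ to force $v^*Mv<0$. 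For $0<r<1$ (so $r(r-1)<0$) take the indefinite $H=\begin{pmatrix}0&1\\1&0\end{pmatrix}$, with $\det H=-1$; for $r<0$ (so $r(r-1)>0$) take the positive definite $H=\begin{pmatrix}1&1/2\\1/2&1\end{pmatrix}$, with $\det H=\tfrac34$. In either case $M\not\ge 0$, contradicting operator convexity.

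The only genuinely substantive point — the hard part, such as it is — is the decoupling observation: $M$ splits into an indefinite rank-two piece $r(r-1)(H^2Y+YH^2)$ and a positive piece $2r^2HYH$, and one must arrange that a single test vector simultaneously annihilates the positive piece (via $v\perp Hu$) and extracts a strictly negative value from the indefinite piece with the sign appropriate to the given range of $r$. Everything else is routine: the reduction to $Y=e_1e_1^*$, the binomial expansion at the identity, and the remark that since $H$ and $v$ live in a $2\times 2$ corner the entire argument takes place in that block, so no separate work is needed for $n>2$.
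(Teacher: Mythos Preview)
Your proof is correct, and the approach is genuinely different from the paper's. The paper proceeds by exhibiting explicit finite counterexamples to midpoint operator convexity: for $r\in(0,\tfrac12)$ it restricts to scalar multiples of the identity and uses concavity of $x\mapsto x^{2r}$; for $r<0$ it writes down two specific $2\times 2$ matrices $X_1,X_2$ and a test vector $|w\rangle$ and checks the inequality fails in a limit; for $\tfrac12\le r<1$ it uses a third explicit $2\times 2$ pair. You instead linearize: a single Hessian computation at $X=\id$, split into the positive piece $2r^2 HYH$ (killed by taking $v\perp Hu$) and the indefinite piece $r(r-1)(H^2Y+YH^2)$, handles the full range $r\in(-\infty,0)\cup(0,1)$ in one stroke, with only the sign of $\det H$ adjusted to match the sign of $r(r-1)$. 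The gain of your argument is unification and brevity; the paper's construction, by contrast, yields concrete non-infinitesimal witnesses $X_1,X_2$ to the failure of convexity, which can be useful if one wants quantitative statements rather than merely the qualitative negation.
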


\begin{proof}[Proof of Lemma \ref{counter}]
First assume that $r\in(0,1/2)$. Then for any non-trivial $Y\geq 0$ (not
necessarily rank one) the map $X\mapsto X^r Y X^r$ from $\Pn$ to $\Pn$ is
not operator convex. This follows simply from the fact that the map
$x\mapsto x^{2r} Y$ from $(0,\infty)$ to $\Pn$ is not operator convex for
$0<r<1/2$. It is, in fact, strictly concave in this region.

Now let $r\in(-\infty,0)$. (The proof actually also works  for
$r\in(0,1/2)$, which is hardly surprising in light of the concavity
mentioned above.)  Clearly, we may assume $n=2$. Let $Y = |v\rangle\langle
v|$. If the convexity were
true, then for all $X_1,X_2\in \mathcal{P}_2$, with $X = (X_1+X_2)/2$,  we
would have
\begin{equation}\label{contra}
X^r |v\rangle \langle v |X^r \leq \tfrac12 X_1^r |v\rangle
\langle v |X_1^r + \tfrac12 X_2^r |v\rangle \langle v |X_2^r\ .
\end{equation}
Without loss of generality, let $|v\rangle = (1,\, 1)$. 
If we take $X_1 = \left[\begin{array}{cc} 2 & 0\\ 0 &
2\end{array}\right]$ and $X_2 = t \left[\begin{array}{cc} 2 & 0\\ 0 &
4\end{array}\right]$, with $t>0$, then (\ref{contra}) becomes
\begin{multline} \label{contra2}
\left[\begin{array}{cc} (1+t)^{2r}  & (1+t)^{r}(1+2t)^{r}  \\ 
(1+t)^{r}(1+2t)^{r} & (1+2t)^{2r} \end{array}\right]  \\    
\leq 2^{2r-1}\left[\begin{array}{cc} 1 & 1\\ 1 & 1\end{array}\right]
+ t^{2r}2^{2r-1}\left[\begin{array}{cc} 1 & 2^r\\ 2^r & 2^{2r}\end{array}\right]
\ .\end{multline}
The vector $|w\rangle = (2^r, -1)$ is in the null space of the second matrix
on the right in (\ref{contra2}), and taking the trace of 
both sides against $|w\rangle\langle w|$ yields
$$\left\langle w, \left[\begin{array}{cc} (1+t)^{2r}  & (1+t)^{r}(1+2t)^{r}  \\ (1+t)^{r}(1+2t)^{r} & (1+2t)^{2r} \end{array}\right] w\right\rangle \leq 
2^{2r-1} (2^r -1)^2\ ,$$
which, in the limit $t\to0$, becomes $(2^r -1)^2 \leq 2^{2r-1} (2^r -1)^2$,
so that for $r\neq 0$, we would have $1 \leq 2^{2r-1}$. This is false for
all
$r < 1/2$, which shows that  (\ref{contra}) leads to a contradiction for
nonzero $r\in(-\infty,0)\cup(0,1/2)$.

Our proof for $1/2\leq r <1$ is different; this proof actually works 
in the range $0<r<1$. Let $|v\rangle$ be a unit vector in $\C^n$. Then we
will show that there is another vector $|w\rangle$ in $\C^n$ such that $$X
\mapsto  | \langle w| X^{r} |v\rangle|^2$$ is not convex. Again, we may
assume that $n=2$ and that $|v\rangle = (0,1)$. Take
$$X_1 = \left[\begin{array}{cc} 2 & 2\\ 2 & 2\end{array}\right]\quad{\rm
and}\quad   X_2 =  \left[\begin{array}{cc} 2 & 0\\ 0 & 0\end{array}\right]\
.$$
Let $|w\rangle= (1,-1)$, so that $X_1^{r}|w \rangle=
0$ and $X_2^{r}|v\rangle = 0$. Evidently,
$$\tfrac12  | \langle w| X_1^{r} |v\rangle|^2 + \tfrac12  | \langle
w| X_2^{r} |v\rangle|^2 = 0\ .$$
However, the eigenvalues of $X = \tfrac12(X_1+X_2)$ are easily computed to
be $\lambda_\pm = (3\pm \sqrt{5})/2$, and then a further simple computation
yields
\begin{eqnarray}
\langle w| X^{r} |v\rangle  = \frac{1}{\sqrt{5}}(\lambda_+^{r-1} - 
\lambda_-^{r-1})\ ,\nonumber
\end{eqnarray}
and this is strictly negative for all $0<r<1$.
\end{proof}

\begin{proof}[Proof of Theorem \ref{opcon}]
 As explained above, 
the convexity assertion in {\it (1.)}
is a consequence of \cite[Corollary 2.1]{Lieb73}. Our goal now is
to prove that there are no other cases of convexity or concavity. 

A number of exponents can be excluded by considering the scalar 
case.
Moreover, since $X\mapsto X^r$
is operator convex on $\Pn$ if and only if $r \in [-1,0]\cup [1,2]$, 
and is operator concave on  $\Pn$ if and only if $r \in [0,1]$, the only
cases
in which   convexity  cannot be immediately ruled out are  
$p\in [1,2]$, $q\in [-1,0]$ and $p+q \geq 1$ (or the same with $p$ and $q$
interchanged). 
Likewise,  the only cases of in which concavity cannot be immediately 
ruled out are $p,q \in [0,1]$, $p+q \leq 1$.

For part {\it (1.)} it remains for us to show that
\eqref{pow1} is not jointly operator convex in the following three cases,

\smallskip
\noindent{\it (a)} \ $p\in[-1,0)$, $q\in [1,2)$ and $p+q\geq 1$.

\smallskip
\noindent{\it (b)} \ $p\in[1,2]$, $q\in [-1,0)$ and $p+q\geq 1$.

\smallskip
\noindent{\it (c)} \ $p\in(-1,0)$ and $p+q\geq -1$.

\smallskip

Let us prove failure of convexity in case ${\it (a)}$. Let $|v\rangle$ be any unit
vector in $\C^n$. Let $P$ be the orthogonal projection
onto the span of $v$, and let $P^\perp$ denote the complementary
projection. 
Fix $t> 0$, and define $B_t = P+tP^\perp$. Then
$B_t^p = P + t^p P^\perp$. If convexity would hold, then for any $|w\rangle$ the map
$A \mapsto \langle w| A^{q/2}B_t^p A^{q/2}|w\rangle$ would be convex. Since $\lim_{t\to\infty}B_t^p = |v\rangle \langle v|$, 
and since limits of convex functions are convex, it would follow that
$A \mapsto |\langle v |A^{q/2}|w\rangle |^2$ would be convex on
$\Pn$ for any $|w\rangle$. This contradicts Lemma \ref{counter} with $r=q/2\in[1/2,1)$. The proof for 
${\it (c)}$ is almost exactly the same, except one uses Lemma \ref{counter} with $r =q/2 < 0$.

The proof in case ${\it (b)}$ is similar. Again, we let $|v\rangle$ be a
unit 
vector in $\C^n$ and set $B=|v\rangle\langle v|$. Then
$B^p=|v\rangle\langle v|$ and, if convexity would hold, then for any
$|w\rangle$ the map $A\mapsto |\langle v |A^{q/2}|w\rangle |^2$ would be
convex on $\Pn$. This contradicts Lemma \ref{counter} with
$r=q/2\in[-1/2,0)$.

Finally, we prove {\it (2.)}, the failure of concavity. According to the
discussion above, it remains for us to show that \eqref{pow1} is not jointly operator concave for $p,q\in (0,1]$ and $p+q\leq 1$.
Suppose $(A,B)
\mapsto A^{q/2}B^{p}A^{q/2}$ were concave for some $p,q$ in this range. Then for all non-negative $A$ and
$B$ we would have
\begin{eqnarray}
\tfrac12 A^{q/2}B^pA^{q/2} + \tfrac12 B^{q/2}A^{p}B^{q/2} &\leq & \left(\frac{A+B}{2}\right)^{q/2} \left(\frac{B+A}{2}\right)^p \left(\frac{A+B}{2}\right)^{q/2}\nonumber\\
&=& 2^{-p-q}(A+B)^{p+q}\ .\nonumber
\end{eqnarray}
Suppose that $A$ has a non-trivial null space (here we use the assumption $n\geq 2$), and $|v\rangle$ is a unit
vector with $A|v\rangle =0$. By Jensen's inequality, since $p+q\leq 1$,
$$\langle v|(A+B)^{p+q}| v\rangle  \leq \langle v| (A+B) |v\rangle^{p+q}
= \langle v| B |v\rangle ^{p+q}\ .$$
Thus we would have 
$$\langle v| B^{q/2}A^{p}B^{q/2}|v\rangle \leq   2^{1-p-q}\langle v|B 
|v\rangle ^{p+q}\ .$$
The left side is homogeneous of degree $q$ in $B$, while the right side is
homogeneous of degree $p+q$, and hence the inequality cannot be generally
valid. (The positivity of the powers is essential here; the argument of course cannot be adapted to yield a counterexample to the convexity proved in the first part of the theorem.)
\end{proof}

\begin{remark}
There is another way to prove the convexity in   (\ref{pow1}) for 
$q =2 $ and $ -1 \leq p<0$.
For $p=-1$ one can use the Schwarz type inequality in 
\cite{Ruskai, Kiefer}. (This
inequality, however, is actually weaker than the triple convexity
inequality \cite[Corollary 2.1]{Lieb73} that we used in
the proof of Theorem 3.2.)
For $-1<p<0 $ one can use the integral representation 
$B^p = C_p\int_0^\infty
(B+t)^{-1} t^{p} {\rm d}t$ with $C_p>0$
to reduce matters to the case $p=-1$. Indeed, one can replace
$B^{p}$ by any Herglotz function $\int_{t\geq 0} (B+t)^{-1} {\rm d}\mu(t)$
with $\mu >0$.
\end{remark}


\section{Convexity of $\Phi_{p,q,s}(A,B)$}

In this section we prove, among other things, two cases  of a conjecture of Audenaert and Datta \cite{Audenaert}. Much of our analysis is based on the formulas
\begin{equation}\label{var0}
 \tr[X^s] = s\sup_{Z \geq 0}\left\{\tr[XZ^{1-1/s}] + \left(\tfrac{1}{s} -1\right)\tr[Z]\right\} \qquad\text{if}\ s>1
 \end{equation}
and
\begin{equation}\label{var00}
 \tr[X^s] = s\inf_{Z > 0}\left\{\tr[XZ^{1-1/s}] + \left(\tfrac{1}{s} -1\right)\tr[Z]\right\} \qquad\text{if}\ 0 < s<1 \,;
 \end{equation}
see \cite[Lemma 2.2]{Carlen08}. These formulas have already played an important role in our previous works \cite{Carlen08} and \cite{Frank13}.

\begin{thm}\label{larges}
When $p\in [1,2]$, $q \in [-1,0)$, $\Phi_{p,q,s}(A,B)$ is jointly convex for all 
$$s \geq \min\left\{ \frac{1}{p-1}\ , \ \frac{1}{1+q}\right\}\ . $$
\end{thm}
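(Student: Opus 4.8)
The plan is to linearize the outer power with the variational formula (\ref{var0}) and to reduce the resulting ``inner'' trace expression to Lieb's triple matrix convexity theorem \cite[Cor.~2.1]{Lieb73}. First I would dispose of the easy boundaries. Applying the identity $\tr[(MM^*)^s]=\tr[(M^*M)^s]$ to $M=A^{q/2}B^{p/2}$ gives $\Phi_{p,q,s}(A,B)=\tr[(B^{p/2}A^qB^{p/2})^s]$, which lets one move powers between the two slots. The case $s=1$ (which is the threshold only when $p=2$) is Ando's theorem \cite{Ando}, and the case $p=2$ is immediate: by Theorem \ref{opcon} the map $(A,B)\mapsto BA^qB$ is jointly \emph{operator} convex for $q\in[-1,0)$, so, since $t\mapsto t^s$ is convex and monotone for $s\ge1$, $\tr[(BA^qB)^s]$ is jointly convex for every $s\ge1=1/(p-1)$. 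Thus the real content is $p\in(1,2)$ with $s>1$.

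For $s>1$ I would set $X=A^{q/2}B^pA^{q/2}$ and apply (\ref{var0}). The optimizing $Z$ for a given $X$ is $Z=X^s$, so at a midpoint $\bar A=\tfrac12(A_1+A_2)$, $\bar B=\tfrac12(B_1+B_2)$ the supremum is attained at the \emph{self-consistent} weight $W=\bar X^{\,s-1}$, with $\bar X=\bar A^{q/2}\bar B^p\bar A^{q/2}$. Feeding the endpoints into the equality case $\tr[X_iW]\le \tfrac1s\tr[X_i^s]+(1-\tfrac1s)\tr[\bar X^s]$ of (\ref{var0}), a short computation shows that joint midpoint convexity of $\Phi_{p,q,s}$ follows from the single weighted inequality
\begin{equation}\label{crux}
\tr\!\big[\big(\bar X-\tfrac12 X_1-\tfrac12 X_2\big)\,\bar X^{\,s-1}\big]\le 0 ,\qquad X_i=A_i^{q/2}B_i^{p}A_i^{q/2}.
\end{equation}
The virtue of this reduction is that one does \emph{not} need convexity of $(A,B)\mapsto\tr[A^{q/2}B^pA^{q/2}W]$ for all $W\ge0$ (that would be equivalent to operator convexity of $X$, which fails for $q\ne2$ by Theorem \ref{opcon}); only the distinguished weight $W=\bar X^{\,s-1}$ enters (\ref{crux}).

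To prove (\ref{crux}) I would use the integral (Herglotz) representation of the operator convex power. In the first case, write $B^{p}=c_p\int_0^\infty B(B+\mu)^{-1}B\,\mu^{p-2}\,\dd\mu$ (with $c_p>0$, $1<p<2$); inserting this into $X$ and using cyclicity expresses each term of (\ref{crux}) through the Lieb functional $L\mapsto \tr[L\,\mathcal B^{a}L^{*}\mathcal C^{r}]$, with $L$ built linearly from $B$, $\mathcal B=B+\mu$, and $\mathcal C=A$ carrying the exponent $q$. Lieb's theorem \cite[Cor.~2.1]{Lieb73} supplies joint convexity exactly when $a,r<0$ and $-1\le a+r<0$; tracking how the outer exponent $s-1$ redistributes among these powers, the admissibility window closes down precisely at $s=1/(p-1)$. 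Applying the representation to $A^{q}$ in the symmetrized form $\tr[(B^{p/2}A^qB^{p/2})^s]$ instead yields the complementary constraint $s\ge 1/(1+q)$, and using whichever representation is admissible gives the stated $s\ge\min\{1/(p-1),1/(1+q)\}$.

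The hard part will be the factor $A^{q/2}B$ (equivalently $B^{p/2}A^{q/2}$) that survives once the outer power is stripped off: since neither $A^{q/2}$ (for $q/2\notin\{0,1\}$) nor $B^{p/2}$ (for $p/2\ne1$) is affine, this product cannot be placed directly in the single arbitrary-matrix slot of Lieb's theorem. This is exactly why $p=2$ (where $B^{p/2}=B$ is linear) is elementary while $p\in(1,2)$ is not. The integral representation is the device that trades the offending fractional power for a resolvent $(B+\mu)^{-1}$ at the cost of one extra exponent, and the delicate point is to check that, for the self-consistent weight $\bar X^{\,s-1}$ and for $s$ above the threshold, the combination still lands inside Lieb's window $-1\le a+r<0$ after the $\mu$-integration. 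It is this exponent bookkeeping, rather than any single convexity input, that I expect to force the precise value of the threshold.
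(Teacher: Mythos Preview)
Your reduction to the weighted midpoint inequality $\tr\bigl[(\bar X-\tfrac12 X_1-\tfrac12 X_2)\,\bar X^{\,s-1}\bigr]\le0$ is valid, but it does not advance the proof. Once the weight $W=\bar X^{\,s-1}$ is frozen, this is simply the midpoint inequality for $(A,B)\mapsto\tr[A^{q/2}B^{p}A^{q/2}W]$ with a \emph{fixed} positive $W$; the exponent $s$ is entirely hidden inside $W$ and is invisible to any argument that treats $W$ as a generic weight. Your integral representation of $B^{p}$ (or of $A^{q}$) does exactly that: nothing in it uses the specific structure of $W$, so if it established the inequality it would do so for \emph{every} $W\ge0$ and hence for every $s>1$. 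But validity for all $W\ge0$ is precisely the operator convexity of $(A,B)\mapsto A^{q/2}B^{p}A^{q/2}$, which Theorem~\ref{opcon} rules out whenever $q\ne2$. Thus the claimed ``redistribution of $s-1$ among the powers'' producing the thresholds $1/(p-1)$ and $1/(1+q)$ has no mechanism behind it; after the integral representation the non-affine factor $A^{q/2}$ (respectively $B^{p/2}$) still occupies the slot that Lieb's Corollary~2.1 requires to be linear---exactly the obstruction you flag in your last paragraph without resolving.

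The missing idea is not to freeze the optimizer in (\ref{var0}) but to \emph{change variables} in the supremum so as to absorb the offending fractional power into the dummy variable. Setting $D^{2}=A^{q/2}Z^{(s-1)/s}A^{q/2}$ turns (\ref{var0}) into
\[
\Phi_{p,q,s}(A,B)=s\sup_{D\ge0}\Bigl\{\tr[DB^{p}D]+\bigl(\tfrac1s-1\bigr)\tr\bigl[(DA^{-q}D)^{s/(s-1)}\bigr]\Bigr\},
\]
where the first term is convex in $B$ (operator convexity of $B^{p}$ for $1\le p\le2$) and the second is convex in $A$ (since $\tfrac1s-1<0$ and $A\mapsto\tr[(DA^{-q}D)^{s/(s-1)}]$ is concave by Hiai's extension of Epstein's theorem \cite[Thm.~4.1]{Hiai}) exactly when $s/(s-1)\le -1/q$, i.e.\ $s\ge1/(1+q)$. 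The companion substitution $D^{2}=B^{p/2-1}Z^{(s-1)/s}B^{p/2-1}$, applied to the symmetrized form $\tr[(B^{p/2}A^{q}B^{p/2})^{s}]$, produces $\tr[DBA^{q}BD]$ in the first slot (jointly convex by Theorem~\ref{opcon}) and yields the threshold $s\ge1/(p-1)$. The decisive input is therefore the Epstein--Hiai concavity, not Lieb's triple convexity theorem.
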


Here we set $\frac{1}{p-1}=+\infty$ for $p=1$ and $\frac{1}{1+q}=+\infty$ for $q=-1$. Thus, the theorem implies that, in particular, for $p=1$, $\Phi_{1,q,s}(A,B)$ is jointly convex in the optimal range $q\in[-1,0)$ and $s\geq\frac{1}{1+q}$. An optimal result for $p=2$ will be proved in Theorem \ref{p2}. As discussed in Section \ref{sec:known}, for $p\in (1,2)$, $q \in [-1,0)$, the region where convexity is not settled is $1/(p+q)\leq s<1$ and $1<s<\min\left\{ \frac{1}{p-1}\ , \ \frac{1}{1+q}\right\}$.

\begin{proof}
First, we prove convexity if $s\geq 1/(1+q)$. Since this implies $s>1$, we have by \eqref{var0},
$$\Phi_{p,q,s}(A,B) =  s\sup_{Z \geq 0}\left\{\tr[A^{q/2}B^pA^{q/2}Z^{1-1/s}] + \left(\tfrac{1}{s} -1\right)\tr[Z]\right\}\ .$$
Now define $D^2 =  A^{q/2}Z^{(s-1)/s}A^{q/2}$ and note that $Z= (A^{-q/2} D^2 A^{-q/2})^{s/(s-1)}$ to write
\begin{equation}
\label{eq:largesproof}
\Phi_{p,q,s}(A,B) = s\sup_{D \geq 0}\left\{\tr[D B^pD] + \left(\tfrac{1}{s} -1\right)\tr[(D A^{-q} D)^{s/(s-1)}]\right\}\ .
\end{equation}
For $1\leq p\leq 2$, the map $B\mapsto B^p$ is operator convex and therefore $B\mapsto \tr[D B^pD]$ is convex. Moreover, by Hiai's extension of Epstein's Theorem \cite[Thm. 4.1]{Hiai} the map $A\mapsto \tr[(D A^{-q} D)^{s/(s-1)}]$ is concave as long as $s/(s-1) \leq -1/q$, which is the same as $s\geq 1/(1+q)$. Thus, \eqref{eq:largesproof} represents $\Phi_{p,q,s}(A,B)$ as a supremum of jointly convex functions and so $\Phi_{p,q,s}(A,B)$ is jointly convex for $s\geq 1/(1+q)$. This proves the first part of the theorem.

We now prove convexity if $s\geq 1/(p-1)$. Let us first consider the case $p=2$ and $s=1$, where $\Phi_{2,q,1}(A,B)=\tr[A^{q/2}B^2 A^{q/2}]=\tr[B A^q B]$. For $-1\leq q<0$, the map $(A,B)\mapsto B A^q B$ is operator convex by Theorem~\ref{opcon} and therefore $(A,B)\mapsto\tr[B A^q B]$ is convex, as claimed. We now assume that $s>1$ (and still $s\geq 1/(p-1)$). Then by \eqref{var0}, making use of $\tr[(A^{q/2}B^p A^{q/2})^s]  = \tr[(B^{p/2}A^{q}B^{p/2})^s]$,
$$
\Phi_{p,q,s}(A,B) =  s\sup_{Z \geq 0}\left\{\tr[B^{p/2}A^qB^{p/2}Z^{1-1/s}] + \left(\tfrac{1}{s} -1\right)\tr[Z]\right\}\ .
$$
Note that
$$
\tr[B^{p/2}A^qB^{p/2}Z^{1-1/s}] = \tr[BA^qB(B^{p/2-1}Z^{1-1/s}B^{p/2-1})]\ .
$$
Define $D^2 = B^{p/2-1}Z^{(s-1)/s}B^{p/2-1}$, so that $Z =   (B^{1-p/2} D^2 B^{1-p/2} )^{s/(s-1)}$. Then
\begin{align}
\label{var11}
\Phi_{p,q,s}(A,B) & =  s\sup_{D \geq 0}\left\{\tr[DBA^qBD] + \left(\tfrac{1}{s} -1\right) \tr[(B^{1-p/2} D^2 B^{1-p/2} )^{s/(s-1)}] \right\} \notag \\
& =  s\sup_{D \geq 0}\left\{\tr[DBA^qBD] + \left(\tfrac{1}{s} -1\right)\tr[(D B^{2-p} D )^{s/(s-1)}]\right\} \ .
\end{align}
Since $-1\leq q<0$, $(A,B)\mapsto BA^qB$ is operator convex by Theorem \ref{opcon}, so $(A,B)\mapsto\tr[DBA^qBD]$ is convex. By Hiai's extension of Epstein's Theorem \cite[Thm. 4.1]{Hiai}, $B \mapsto \tr[(D B^{2-p} D )^{s/(s-1)}]$ is concave as long as $s/(s-1) \leq 1/(2-p)$, which is the same as $s \geq 1/(p-1)$. Thus, \eqref{var11} represents $\Phi_{p,q,s}(A,B)$ as a supremum of jointly convex functions and so $\Phi_{p,q,s}(A,B)$ is jointly convex for $s\geq1/(p-1)$. This completes the proof.
\end{proof}

\begin{thm} \label{p2}
When $p=2$, $\Phi_{p,q,s}(A,B)$ is jointly convex for all $-1 \leq
q < 0$ and $s \geq1/(2+q)$. 
\end{thm}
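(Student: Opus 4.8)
The plan is to split the parameter range at $s=1$. For $s\ge 1$ the claim is already contained in Theorem~\ref{larges}, whose hypothesis $s\ge\min\{\tfrac{1}{p-1},\tfrac{1}{1+q}\}$ reads $s\ge 1$ when $p=2$ (here $\tfrac{1}{1+q}\ge 1$). Hence it remains to treat $1/(2+q)\le s<1$; this interval is nonempty only for $q\in(-1,0)$, in which case $1/(2+q)>1/2$ forces $s>1/2$ automatically. Throughout I would use the spectral identity $\tr[(A^{q/2}B^2A^{q/2})^s]=\tr[(BA^qB)^s]$, so that $\Phi_{2,q,s}(A,B)=\tr[(BA^qB)^s]$ with $BA^qB=(A^{q/2}B)^*(A^{q/2}B)\ge 0$.

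For $0<s<1$ I would apply the variational formula \eqref{var00} to $X=BA^qB$ to write
$$\Phi_{2,q,s}(A,B)=s\inf_{Z>0}\Big\{\tr[BA^qB\,Z^{1-1/s}]+\big(\tfrac1s-1\big)\tr[Z]\Big\}\,.$$
The key point is that this should \emph{not} be read as an infimum of convex functions (which would prove nothing); rather, I would show that the bracketed functional $F(A,B,Z)$ is jointly convex in all three variables $(A,B,Z)$ on $\Pn\times\Pn\times\Pn$, and then invoke the standard fact that partial minimization of a jointly convex function over a convex set (here the cone $Z>0$) produces a convex function of the remaining variables. Since $\Phi_{2,q,s}(A,B)=s\inf_Z F(A,B,Z)$ is finite and positive, this yields joint convexity in $(A,B)$.

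The crux is therefore the joint convexity of $F$. The term $(\tfrac1s-1)\tr[Z]$ is linear in $Z$, hence harmless. For the first term I would write $BA^qB=B\,A^{q}\,B^{*}$ and recognize $\tr[B A^{q} B^{*} Z^{1-1/s}]$ as exactly the triple trace functional of Lieb's triple convexity theorem \cite[Corollary 2.1]{Lieb73}: with the \emph{arbitrary} matrix taken to be $B$ and the two positive matrices taken to be $A$ and $Z$, the relevant exponents are $p_{\mathrm{Lieb}}=q$ and $r_{\mathrm{Lieb}}=1-1/s$. That theorem requires $p_{\mathrm{Lieb}},r_{\mathrm{Lieb}}<0$ and $-1\le p_{\mathrm{Lieb}}+r_{\mathrm{Lieb}}<0$. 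Here $q<0$ and $1-1/s<0$ (since $s<1$) give the negativity; moreover $s<1\le \tfrac{1}{1+q}$ forces $q+1-1/s<0$, while the remaining lower bound $q+1-1/s\ge -1$ is \emph{precisely} the hypothesis $s\ge 1/(2+q)$. Thus both conditions of Lieb's theorem hold exactly on our parameter range, $F$ is jointly convex, and the proof concludes.

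I expect the main obstacle to be conceptual rather than computational: in the regime $0<s<1$ the formula \eqref{var00} is an infimum, so the ``supremum of convex functions'' mechanism used for $s\ge 1$ in Theorem~\ref{larges} is unavailable, and one must instead extract convexity by partial minimization. This in turn demands joint convexity of the integrand in the \emph{auxiliary} variable $Z$ as well, a property that is not supplied by the operator convexity of $BA^qB$ (Theorem~\ref{opcon}) alone but rather by the deeper triple convexity theorem of \cite{Lieb73}. The exact matching of its threshold $-1\le p+r<0$ with the sharp condition $s\ge 1/(2+q)$ is what makes the range in Theorem~\ref{p2} optimal.
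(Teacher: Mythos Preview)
Your proof is correct and follows essentially the same route as the paper: split at $s=1$, invoke Theorem~\ref{larges} for $s\ge 1$, and for $1/(2+q)\le s<1$ apply the variational formula \eqref{var00} to $X=BA^qB$, use Lieb's triple convexity theorem \cite[Corollary~2.1]{Lieb73} to get joint convexity of the integrand in $(A,B,Z)$, and then pass to the partial infimum (the paper cites \cite[Lemma~2.3]{Carlen08} for this last step, which is the ``standard fact'' you invoke). Your verification of the exponent conditions, including the upper bound $q+1-1/s<0$, is slightly more explicit than the paper's but otherwise identical in substance.
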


This result yields the optimal range of convexity for $p=2$. It had been conjectured in \cite{Audenaert}
for $s = 1/(2+q)$. 

\begin{proof} 
The convexity for $s\geq 1$ follows from Theorem \ref{larges} and therefore we may assume that $1/(p+q)\leq s<1$. Then, making use of $\tr[(A^{q/2} B^2 A^{q/2})^s] = \tr[(B A^q B)^s]$,
 \begin{equation}\label{var000}
\Phi_{2,q,s}(A,B) = s\inf_{Z > 0}\left\{\tr[BA^qBZ^{1-1/s}] + \left(\tfrac{1}{s} -1\right)\tr[Z]\right\}\ .
 \end{equation}
The important distinction between this formula and formulas \eqref{eq:largesproof} and \eqref{var11} is the infimum in place of the supremum. Joint convexity in $A, B$ no longer suffices. Instead we need joint convexity in $A,B, Z$, with which we can apply \cite[Lemma 2.3]{Carlen08}. 
 
 Note that $1-1/s \leq 0$. By \cite[Corollary 2.1]{Lieb73}, $(A,B,Z) \mapsto
\tr[BA^qBZ^{1-1/s}]$ is jointly convex
 as long as $q+ 1-1/s \geq -1$, which means $s\geq 1/(2+q)$.  For such
$s$, the argument of the infimum in 
 (\ref{var000}) is jointly convex in $A$, $B$ and $Z$. 
 By \cite[Lemma 2.3]{Carlen08}, the infimum itself is jointly convex in $A$ and $B$. This proves the assertion for $ 1/(2+q) \leq s \leq 1$.
 \end{proof}
 
\begin{remark}
In the previous proof for the range $s\geq 1$ we referred to Theorem \ref{larges} which, in turn, was based on Hiai's extension of Epstein's theorem. For the case relevant for Theorem \ref{p2}, however, there is a more direct proof. Indeed, let $A_j,B_j\in \Pn$, $j=1,2$, and $\lambda \in (0,1)$ and set $A = \lambda A_1+ (1-\lambda)A_2$
 and $B= \lambda B_1+ (1-\lambda)B_2$. Then by Theorem \ref{opcon} for $-1\leq q<0$,
 $$BA^qB \leq \lambda B_1A_1^qB_1 + (1-\lambda) B_2A_2^qB_2\ .$$
 For all $s\geq 0$, $X\mapsto \tr[X^s]$ is monotone on $\Pn$. Hence, even for all $s\geq 0$, 
 $$\tr[(BA^qB)^s] \leq \tr[(\lambda B_1A_1^qB_1 + (1-\lambda) B_2A_2^qB_2)^s]\ .$$
 Finally, for $s\geq 1$, $X\mapsto \tr[X^s]$ is convex on $\Pn$. Therefore,
 $$
 \tr[(\lambda B_1A_1^qB_1 + (1-\lambda) B_2A_2^qB_2)^s] \leq \lambda \tr[(B_1A_1^qB_1)^s] + (1-\lambda) \tr[(B_1A_1^qB_1)^s]\ .$$
 This proves the convexity for $s\geq 1$ and $-1\leq q<0$. 
\end{remark}

The next result concerns the concavity of $\Phi_{p,q,s}(A,B)$.

\begin{thm}\label{conc}
The trace function $\Phi_{p,q,s}(A,B)$ is jointly concave if and only if
$0\leq p,q \leq 1$ and $0\leq s \leq 1/(p+q)$.
\end{thm}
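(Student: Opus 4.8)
The plan is to prove the two directions separately, dispatching necessity quickly and concentrating on sufficiency. For necessity, I would invoke Hiai's result quoted in Section~\ref{sec:known}: if $p,q,s$ are all nonzero and $\Phi_{p,q,s}$ is jointly concave, then necessarily $0<p,q\leq 1$ and $0<s\leq 1/(p+q)$. The remaining boundary cases, in which one of $p,q,s$ vanishes, are trivial—if $s=0$ the function is the constant $n$, and if $p=0$ (resp. $q=0$) it reduces to a function of $A$ (resp. $B$) alone—so the closed conditions $0\leq p,q\leq 1$, $0\leq s\leq 1/(p+q)$ describe exactly the concavity region.

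For sufficiency, assume $0<p,q\leq 1$ and $0<s\leq 1/(p+q)$ and split into the ranges $s\geq 1/2$ and $s<1/2$. The range $1/2\leq s\leq 1/(p+q)$ is already settled by Hiai \cite{Hiai}, so the whole task is the small-$s$ regime, and here I would use the infimum formula \eqref{var00}. The naive move—linearizing the outer power and hoping that $(A,B)\mapsto\tr[A^{q/2}B^pA^{q/2}Z^{1-1/s}]$ is jointly concave for fixed $Z$—fails: already for $Z=\id$, $p=q=1$ it is $\tr[AB]$, which is indefinite (this is essentially the content of the Corollary to Theorem~\ref{opcon}). The key idea is therefore to substitute so as to \emph{decouple} the $A$- and $B$-dependence. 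Starting from \eqref{var00} with $X=A^{q/2}B^pA^{q/2}$, I would set $D^2=A^{q/2}Z^{(s-1)/s}A^{q/2}$, so that $Z=(A^{-q/2}D^2A^{-q/2})^{s/(s-1)}$; using $\tr[B^pD^2]=\tr[DB^pD]$ and the eigenvalue identity $\tr[(A^{-q/2}D^2A^{-q/2})^{s/(s-1)}]=\tr[(DA^{-q}D)^{s/(s-1)}]$, this rewrites
\[
\Phi_{p,q,s}(A,B)=s\inf_{D>0}\left\{\tr[DB^pD]+\left(\tfrac{1}{s}-1\right)\tr[(DA^{-q}D)^{s/(s-1)}]\right\}.
\]

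For each fixed $D$ the integrand is the sum of a function of $B$ alone and a function of $A$ alone. Since $0<p\leq 1$, operator concavity of $B\mapsto B^p$ makes $B\mapsto\tr[DB^pD]$ concave; and since $\tfrac{1}{s}-1>0$, it remains only that $A\mapsto\tr[(DA^{-q}D)^{s/(s-1)}]$ be concave, which by Hiai's extension of Epstein's Theorem \cite[Thm.~4.1]{Hiai} holds precisely when $s\leq 1/(1+q)$ (the scalar threshold $\tfrac{qs}{1-s}\leq 1$). A sum of a concave function of $B$ and a concave function of $A$ is jointly concave, and a pointwise infimum of jointly concave functions is jointly concave; hence $\Phi_{p,q,s}$ is jointly concave whenever $0<p,q\leq 1$ and $s\leq 1/(1+q)$. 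Running the identical argument on $\tr[(B^{p/2}A^qB^{p/2})^s]$ yields concavity for $s\leq 1/(1+p)$ as well, so the substitution covers $s\leq 1/(1+\min\{p,q\})$.

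Gluing the pieces finishes the proof: because $\min\{p,q\}\leq 1$ we have $1/(1+\min\{p,q\})\geq 1/2$, so the substitution reaches all the way up to at least $s=1/2$, while Hiai covers $[1/2,1/(p+q)]$, and the union is the full interval $(0,1/(p+q)]$. The step I expect to be the main obstacle is the decoupling substitution itself: the difficulty in the small-$s$ range is exactly that the outer power cannot be linearized by a single trace against a fixed matrix, and the choice $D^2=A^{q/2}Z^{(s-1)/s}A^{q/2}$ is what separates the two variables so that operator concavity of $B^p$ and the Epstein--Hiai concavity in $A$ apply to the two summands independently. Verifying that the Epstein--Hiai threshold is exactly $s\leq 1/(1+q)$ (and, symmetrically, $s\leq 1/(1+p)$), and that these reach down past $1/2$ to abut Hiai's range, is what makes the coverage of $(0,1/(p+q)]$ complete.
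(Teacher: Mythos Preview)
Your proposal is correct and follows essentially the same approach as the paper: both invoke Hiai for necessity and for the range $1/2\leq s\leq 1/(p+q)$, and for $0<s<1/2$ both apply the infimum formula \eqref{var00} with the decoupling substitution $Y=D^2=A^{q/2}Z^{1-1/s}A^{q/2}$, reducing to operator concavity of $B\mapsto B^p$ plus the Epstein--Hiai concavity in $A$ under the threshold $s\leq 1/(1+q)$. The only cosmetic difference is that the paper writes the second term as $\tr[(Y^{-1/2}A^qY^{-1/2})^{s/(1-s)}]$ with positive exponents, so Epstein--Hiai applies directly, whereas your expression $\tr[(DA^{-q}D)^{s/(s-1)}]$ has negative inner and outer exponents and needs one more eigenvalue flip before the theorem is invoked; your symmetric $p\leftrightarrow q$ argument is correct but unnecessary, since $1/(1+q)\geq 1/2$ already suffices.
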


\begin{proof}
The necessity of the condition is proved in \cite[Prop. 5.1]{Hiai} and  the
sufficiency for $1/2 \leq s \leq 1/(p+q)$ is proved in \cite[Thm. 2.1]{Hiai}. Our task
is to prove sufficiency in the case $0<s<1/2$. We write, using \eqref{var00},
\begin{align*}
\Phi_{p,q,s}(A,B) &= s \inf_{X>0} \tr\, \left\{ A^{q/2} B^p A^{q/2} X^{1-1/s}   
+(\tfrac{1}{s} -1)X \right\}\\
&= s \inf_{Y>0} \tr \left\{ B^p Y  +(\tfrac{1}{s}-1) \left( A^{q/2} Y^{-1} A^{q/2}
\right)^{s/(1-s)} \right\}\\
&= s \inf_{Y>0} \tr \left\{ B^p Y  +(\tfrac{1}{s}-1) \left( Y^{-1/2} A^q Y^{-1/2}
\right)^{s/(1-s)} \right\}\ .
\end{align*}
Since $0\leq p\leq 1$, $B\mapsto B^p$ is operator concave and so $B\mapsto\tr B^pY$ is concave. By the extension of Epstein's Theorem  proved in \cite[Theorem 4.1]{Hiai}, $A \mapsto \tr (Y^{-1/2} A^q Y^{-1/2})^{s/(1-s)}$ is concave if $s/(1-s)\leq 1/q$. This condition is satisfied since $s\leq 1/2\leq 1/(1+q)$. We conclude that $\Phi_{p,q,s}(A,B)$ as an infimum of concave functions is concave.
\end{proof}

We conclude with a corollary of Theorem \ref{p2}. For $\rho,\sigma\in\Pn$ and $\alpha,z>0$, we introduce the so-called $\alpha-z$-relative R\'enyi entropies
$$
D_{\alpha,z}(\rho||\sigma) = \frac1{\alpha-1} \ln \frac{\tr\left( \sigma^{(1-\alpha)/(2z)} \rho^{\alpha/z} \sigma^{(1-\alpha)/(2z)}\right)^z}{\tr\rho} \,.
$$
(For $\alpha=1$, a limit has to be taken.) These functionals appeared in
\cite[Sec. 3.3]{Jaksic} and were further studied  in
\cite{Audenaert}, where the question was raised
whether the $\alpha-z$-relative R\'enyi entropies are monotone under
completely positive, trace preserving maps. Currently this is known for
$0<\alpha\leq 1$ and $z\geq\max\{\alpha,1-\alpha\}$, and for
$1\leq\alpha\leq 2$ and $z=1$, and for $1\leq\alpha<\infty$ and $z=\alpha$.
See \cite{Audenaert} for these cases. In this paper Audenaert  and Datta
conjecture that monotonicity  holds for $1\leq\alpha\leq 2$ and
$\alpha/2\leq z<\alpha$, and for $2\leq\alpha<\infty$ and $\alpha-1\leq z<
\alpha$. Our contribution here is to prove their conjecture for
$1<\alpha=2z\leq 2$.

\begin{cl}
Let $\alpha=2z\in (1,2]$ and let $\rho,\sigma\in\Pn$. Then for any completely positive, trace preserving map $\mathcal E$ on $\Pn$,
$$
D_{\alpha,\alpha/2}(\rho||\sigma) \geq D_{\alpha,\alpha/2}(\mathcal E(\rho)||\mathcal E(\sigma)) \,.
$$
\end{cl}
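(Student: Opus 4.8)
The plan is to reduce the claimed data-processing inequality to the boundary case of the joint convexity established in Theorem~\ref{p2}. First I would rewrite the numerator of $D_{\alpha,\alpha/2}$ in the notation of the paper. With $z=\alpha/2$ one has $\alpha/z=2$ and $(1-\alpha)/(2z)=(1-\alpha)/\alpha$, so that
$$Q(\rho||\sigma) := \tr\left(\sigma^{(1-\alpha)/(2z)}\rho^{\alpha/z}\sigma^{(1-\alpha)/(2z)}\right)^z = \Phi_{2,q,s}(\sigma,\rho),\qquad q=\tfrac{2(1-\alpha)}{\alpha},\quad s=\tfrac{\alpha}{2}.$$
For $\alpha\in(1,2]$ this gives $q\in[-1,0)$ and $s=\alpha/2\in(1/2,1]$, and since $2+q=2/\alpha$ one has $s=1/(2+q)$. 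Thus $(p,q,s)=(2,q,1/(2+q))$ lies exactly on the boundary of the range covered by Theorem~\ref{p2}, so $Q(\rho||\sigma)$ is \emph{jointly convex} in $(\sigma,\rho)$.

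Next I would record that, because $\alpha>1$, the prefactor $\tfrac{1}{\alpha-1}$ is positive and $\ln$ is increasing, while $\mathcal E$ preserves the trace, so $\tr\rho=\tr\mathcal E(\rho)$. Therefore the asserted inequality $D_{\alpha,\alpha/2}(\rho||\sigma)\geq D_{\alpha,\alpha/2}(\mathcal E(\rho)||\mathcal E(\sigma))$ is \emph{equivalent} to the monotonicity $Q(\mathcal E(\rho)||\mathcal E(\sigma))\leq Q(\rho||\sigma)$.

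To pass from joint convexity of $Q$ to this monotonicity I would invoke the standard convexity-implies-data-processing machinery. Note two elementary properties of $Q$: it is invariant under simultaneous unitary conjugation $(\rho,\sigma)\mapsto(U\rho U^*,U\sigma U^*)$, and, because all factors tensorize and $\left(\omega^{(1-\alpha)/(2z)}\omega^{\alpha/z}\omega^{(1-\alpha)/(2z)}\right)^z=(\omega^{1/z})^z=\omega$ has trace $1$, one has $Q(\rho\otimes\omega||\sigma\otimes\omega)=Q(\rho||\sigma)$ for any state $\omega$. By the Stinespring/Lindblad representation every completely positive, trace preserving map has the form $\mathcal E(\cdot)=\tr_2\big(U(\,\cdot\,\otimes\omega)U^*\big)$, so by the two properties just noted it suffices to prove monotonicity under the partial trace $\tr_2$. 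For this I would represent the partial trace, composed with replacing the second factor by the maximally mixed state, as an average of unitary conjugations, $\rho_1\otimes\tfrac{I}{d_2}=\int (I\otimes V_g)\,\rho_{12}\,(I\otimes V_g)^*\,dg$. Applying joint convexity to this average and then the tensor and unitary invariance properties yields $Q(\rho_1||\sigma_1)=Q(\rho_1\otimes\tfrac{I}{d_2}||\sigma_1\otimes\tfrac{I}{d_2})\leq\int Q(\rho_{12}||\sigma_{12})\,dg=Q(\rho_{12}||\sigma_{12})$. Chaining the three steps gives $Q(\mathcal E(\rho)||\mathcal E(\sigma))\leq Q(\rho||\sigma)$, completing the proof.

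The genuinely new input is entirely contained in the boundary convexity of Theorem~\ref{p2}; the rest is routine. The one point that requires care is the partial-trace step: one must justify the unitary-averaging representation of the partial trace and keep the negative-power factor $\sigma^{(1-\alpha)/(2z)}$ well defined. Since $(1-\alpha)/(2z)<0$, these expressions must be interpreted on supports (using that a pure ancilla acts as the identity on its support, so the tensor identity still holds), and the possibility that $\mathcal E(\sigma)$ has strictly smaller support than $\sigma$ is handled by a continuity argument, replacing $\sigma$ by $\sigma+\varepsilon I$ (or $\mathcal E(\sigma)$ by $\mathcal E(\sigma)+\varepsilon I$) and letting $\varepsilon\to 0$.
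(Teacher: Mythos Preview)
Your proof is correct and follows essentially the same route as the paper: you identify $Q(\rho\|\sigma)=\Phi_{2,\,2(1-\alpha)/\alpha,\,\alpha/2}(\sigma,\rho)$, verify that the parameters lie in the range covered by Theorem~\ref{p2} (indeed at the boundary $s=1/(2+q)$), and then deduce monotonicity under CPTP maps from joint convexity via the classical Lindblad--Uhlmann argument. The paper simply cites that argument, whereas you spell out the Stinespring/partial-trace/unitary-averaging details, but the substance is identical.
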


\begin{proof}
By a classical argument due to Lindblad and Uhlmann, see, e.g.,
\cite{Carlen09,Frank13}, the monotonicity follows once it is
shown that
$$
(\rho,\sigma)\mapsto \tr\left( \sigma^{(1-\alpha)/\alpha} \rho^2 \sigma^{(1-\alpha)/\alpha}\right)^{\alpha/2} = \Phi_{2,2(1-\alpha)/\alpha,\alpha/2}(\sigma,\rho)
$$
is jointly convex. For $\alpha\in(1,2]$ this convexity follows from Theorem \ref{p2}.
\end{proof}


\noindent{\bf Acknowledgements} We thank Marius Lemm and Mark Wilde, as well as the anonymous referee, for useful remarks.

\end{document}